\documentclass[a4paper,UKenglish]{lipics}
%This is a template for producing LIPIcs articles. 
%See lipics-manual.pdf for further information.
%for A4 paper format use option "a4paper", for US-letter use option "letterpaper"
%for british hyphenation rules use option "UKenglish", for american hyphenation rules use option "USenglish"

% for section-numbered lemmas etc., use "numberwithinsect"

\usepackage{microtype}%if unwanted, comment out or use option "draft"

%\graphicspath{{./graphics/}}%helpful if your graphic files are in another directory

\bibliographystyle{plain}% the recommended bibstyle

\newcommand{\set}[1]{\left\{ #1 \right\}}

\newcommand{\C}{\mathbb{C}}

\newcommand{\R}{\mathbb{R}}
\newcommand{\Q}{\mathbb{Q}}
\newcommand{\A}{\mathbb{A}}
\newcommand{\Z}{\mathbb{Z}}
\newcommand{\N}{\mathbb{N}}

\renewcommand{\Dot}[1]{\langle #1 \rangle}
\renewcommand{\Re}[1]{\text{Re}\left( #1 \right)}

\renewcommand{\O}{\mathcal{O}}
\renewcommand{\P}{\mathcal{P}}

\usepackage{tikz}
\usetikzlibrary{backgrounds, patterns}

% Author macros::begin %%%%%%%%%%%%%%%%%%%%%%%%%%%%%%%%%%%%%%%%%%%%%%%%
\title{Semialgebraic Invariant Synthesis for the Kannan-Lipton Orbit Problem}
%\titlerunning{Invariant Synthesis for the Orbit Problem}
%optional, in case that the title is too long;
%the running title should fit into the top page column

\author[1]{Nathana{\"e}l Fijalkow}
\author[2]{Pierre Ohlmann}
\author[1,3]{Jo{\"e}l Ouaknine}
\author[1]{Amaury Pouly}
\author[1]{James Worrell}
\affil[1]{Department of Computer Science, Oxford University, UK}
\affil[2]{\'{E}cole Normale Sup{\'e}rieure de Lyon, France}
\affil[3]{Max Planck Institute for Software Systems (MPI-SWS), Saarland Informatics Campus, Germany}

\authorrunning{N. Fijalkow, P. Ohlmann, J. Ouaknine, A. Pouly, and J. Worrell} 

\Copyright{Nathana{\"e}l Fijalkow, Pierre Ohlmann, Jo{\"e}l Ouaknine, Amaury Pouly, and James Worrell}

\subjclass{F.3.1 Specifying and Verifying and Reasoning about Programs}
  %Dummy classification -- please refer to \url{http://www.acm.org/about/class/ccs98-html}}% mandatory: Please choose ACM 1998 classifications from http://www.acm.org/about/class/ccs98-html . E.g., cite as "F.1.1 Models of Computation". 
\keywords{Verification,
algebraic computation,
Skolem Problem,
Orbit Problem,
invariants}% mandatory: Please provide 1-5 keywords
% Author macros::end %%%%%%%%%%%%%%%%%%%%%%%%%%%%%%%%%%%%%%%%%%%%%%%%%

%Editor-only macros:: begin (do not touch as author)%%%%%%%%%%%%%%%%%%%%%%%%%%%%%%%%%%
\serieslogo{}%please provide filename (without suffix)
\volumeinfo%(easychair interface)
  {\ }% editors
  {2}% number of editors: 1, 2, ....
  {Conference title on which this volume is based on}% event
  {1}% volume
  {1}% issue
  {1}% starting page number
\EventShortName{}
\DOI{10.4230/LIPIcs.xxx.yyy.p}% to be completed by the volume editor
% Editor-only macros::end %%%%%%%%%%%%%%%%%%%%%%%%%%%%%%%%%%%%%%%%%%%%%%%

\begin{document}

\maketitle

\begin{abstract}

The \emph{Orbit Problem} consists of determining, given a linear
transformation $A$ on $\mathbb{Q}^d$, together with vectors $x$ and
$y$, whether the orbit of $x$ under repeated applications of $A$ can
ever reach $y$. This problem was famously shown to be decidable by
Kannan and Lipton in the 1980s.

In this paper, we are concerned with the problem of synthesising
suitable \emph{invariants} $\mathcal{P} \subseteq \mathbb{R}^d$,
\emph{i.e.}, sets that are stable under $A$ and contain $x$ and not $y$, 
thereby providing compact and versatile certificates of non-reachability. We show that
whether a given instance of the Orbit Problem admits a semialgebraic
invariant is decidable, and moreover in positive instances we provide
an algorithm to synthesise suitable invariants of polynomial size.

It is worth noting that the existence of \emph{semilinear} invariants,
on the other hand, is (to the best of our knowledge) not known to be decidable.
\end{abstract}

\section{Introduction}
The \emph{Orbit Problem} was introduced by Kannan and Lipton 
in the seminal papers~\cite{KL80,KL86}, and shown there to be decidable in polynomial time,
answering in the process a decade-old open problem of Harrison on
accessibility for linear sequential machines~\cite{H69}.
The Orbit Problem can be stated as follows:

\begin{quote}
Given a square matrix
$A \in \mathbb{Q}^{d \times d}$ together with vectors $x, y \in \mathbb{Q}^d$,
decide whether there exists a non-negative integer $n$ such that $A^n
x = y$. 
\end{quote}

In other words, if one considers the discrete `orbit' of the vector
$x$ under repeated applications of the linear transformation $A$, does
the orbit ever hit the target $y$? Although it is not \emph{a priori}
obvious that this problem is even decidable, Kannan and Lipton showed
that it can in fact be solved in polynomial time, by making use of
spectral techniques as well as some sophisticated results from
algebraic number theory.

In instances of non-reachability, a natural and interesting question is
whether one can produce a suitable \emph{invariant} as certificate,
\emph{i.e.}, a set $\mathcal{P} \subseteq \mathbb{R}^d$ that is stable under
$A$ (in the sense that $A \mathcal{P} \subseteq \mathcal{P}$) and such that
$x \in \mathcal{P}$ and $y \notin \mathcal{P}$. The existence of such
an invariant then immediately entails by induction that the orbit of
$x$ does indeed avoid $y$.

Invariants appear in a wide range of contexts, from gauge
theory, dynamical systems, and control theory in physics, mathematics,
and engineering to program verification, static analysis, abstract
interpretation, and programming language semantics (among others) in
computer science. Automated invariant synthesis is a topic of active
current research, particularly in the fields of theorem proving and
program verification; in the latter, for example, one might imagine
that $y$ corresponds to a faulty or undesirable program state, and an
invariant $\mathcal{P}$ as described above amounts to a succinct
`safety' certificate (here the program or procedure in question
corresponds to a simple \textsc{while} loop with linear updates).

The widespread use of invariants should not come as a surprise. In
addition to their obvious advantage in constituting easily
understandable safety certificates, their inductive nature
makes them ideally suited to modular reasoning, often allowing one to analyse
complex systems by breaking them down into simpler parts, each of
which can then be handled in isolation. Invariants, viewed as safety
certificates, also enable one to reason over large sets of program states
rather than individual instances: in the context of the Orbit Problem,
for example, an invariant $\mathcal{P} \subseteq \mathbb{R}^d$ such
that $x \in \mathcal{P}$ and $y \notin \mathcal{P}$ doesn't merely
certify that $y$ is not reachable from $x$, but in fact guarantees
that from \emph{any} starting point $x' \in \mathcal{P}$, it is
impossible to reach \emph{any} of the points $y' \in \mathbb{R}^d \setminus
\mathcal{P}$.

In general, when searching for invariants, one almost always fixes
ahead of time a class of suitable potential candidates. Indeed, absent
such a restriction, one would point out that the orbit $\mathcal{O}(x)
= \{ A^n x : n \geq 0 \}$ is always by definition stable under $A$,
and in instances of non-reachability will therefore always constitute
a safety invariant. Such an invariant will however often not be of
much use, as it will usually lack good algorithmic properties;
for example, as observed in~\cite{KL86}, in dimension $d=5$ and
higher, the question of whether the orbit $\mathcal{O}(x)$ reaches a
given $(d-1)$-dimensional hyperplane corresponds precisely to the
famous \emph{Skolem Problem} (of whether an order-$d$ linear
recurrence sequence over the integers has a zero), whose decidability
has been open for over 80 years~\cite{Tao08}.

Thus let us assume that we are given a domain $\mathbf{D} \subseteq
2^{\mathbb{R}^d}$ of suitable potential invariants. At a minimum, one
would require that the relevant stability and safety conditions (\emph{i.e.},
for any $\mathcal{P} \in \mathbf{D}$, whether
$A \mathcal{P} \subseteq \mathcal{P}$, $x \in \mathcal{P}$, and
$y \notin \mathcal{P}$) be algorithmically checkable (with reasonable
complexity). The following natural questions then arise:

\begin{enumerate}
\item In instances of non-reachability, does a suitable invariant in $\mathbf{D}$
\emph{always} exist?

\item If not, can we characterise the exceptional instances in some way?

\item In instances of non-reachability, can we algorithmically determine whether
a suitable invariant in $\mathbf{D}$ exists, and when this is the case
can we moreover synthesise such an invariant?

\end{enumerate}

\noindent (\textbf{1}) and (\textbf{3}) are usually referred to as \emph{completeness}
and \emph{relative completeness} respectively, whereas (\textbf{2})~attempts to
measure the extent to which completeness fails.

\medskip\noindent\textbf{Main results.}
The main results of this paper concern the synthesis of semialgebraic
invariants for non-reachability instances of the Kannan-Lipton Orbit
Problem, where the input is provided as a triple $(A,x,y)$ with all entries rational,
and can be summarised as follows:

\begin{itemize}

\item We prove that whether a
suitable semialgebraic\footnote{A semialgebraic set is the set of
solutions of a Boolean combination of polynomial inequalities, with
the polynomials in question having integer coefficients.} invariant
exists or not is decidable in polynomial space, and moreover in
positive instances we show how to synthesise a suitable invariant of
polynomial size in polynomial space.

\item We provide a simple characterisation of instances of non-reachability for which
there does not exist a suitable semialgebraic invariant, and show that
such instances are very `rare', in a measure-theoretic sense.

\end{itemize}

Since the existence of suitable semialgebraic invariants for the Orbit
Problem does not coincide precisely with non-reachability, our proof
necessarily departs substantially from that given by Kannan and Lipton
in~\cite{KL80,KL86}. In particular, handling negative instances relies
upon certain topological and geometrical insights into the structure
of semialgebraic sets, and positive instances require the explicit
construction of suitable semialgebraic invariants of polynomial
size. We achieve this by making use of techniques from algebraic
number theory such as Kronecker's Theorem on inhomogeneous
simultaneous Diophantine approximation, and Masser's deep results on
multiplicative relations among algebraic numbers.

The following three examples
illustrate a range of phenomena that arise
in searching for semialgebraic invariants.
\begin{example}
\label{ex:one}
Consider the matrix
\[ A = {\textstyle\frac{1}{5}}\begin{pmatrix}
       4 & - 3\\[2pt] 3
       & 4 \end{pmatrix} \, . \] Matrix $A$ defines a
       counterclockwise rotation around the origin by angle
       $\arctan(3/5)$, which is an irrational multiple of $\pi$.  Thus
       the topological closure of the orbit
       $\mathcal{O}=\{x,Ax,A^2x,\ldots\}$ is a circle in
       $\mathbb{R}^2$.  If $y\not\in \overline{\mathcal{O}}$ then
       $\overline{\mathcal{O}}$ itself is clearly a suitable semialgebraic invariant.
       On other hand, it can be shown that if
       $y\in \overline{\mathcal{O}} \setminus \mathcal{O}$ then there does not exist a
       suitable semialgebraic invariant. (In passing, it is also not difficult
       to see that the only polygons $\mathcal{P}$ that are
       invariant under $A$ are $\emptyset$, $\{(0,0)\}$, and
       $\mathbb{R}^2$.)  More general orthogonal matrices can be
       handled along similar lines to the present case, but the
       analysis is substantially more involved.  In general, the only
       cases in which $y\not\in \mathcal{O}$ but there need not be a
       semialgebraic invariant are when the matrix $A$ is
       diagonalisable and all eigenvalues have modulus one, as in the
       case at hand.
\end{example}

\begin{example}
\label{ex:two}
Consider the matrix
\[ A = \frac{4}{25} \begin{pmatrix}
       4 & -3 & 4 & -3\\[2pt]
        3 & 4 & 3 & 4  \\[2pt]
               0 & 0 & 4 & -3 \\[2pt]
0 & 0 &3 & 4
\end{pmatrix} \] Matrix $A$ has spectral radius $\frac{4}{5}$ and so
$A^nx$ converges to $0$ for any initial vector $x \in \mathbb{Q}^4$.
Given a non-zero target $y \in \mathbb{Q}^4$ that does not lie
in the orbit $x,Ax,A^2x,\ldots$, a natural candidate for an invariant
is an initial segment of the orbit, together with some neighbourhood
$\mathcal{N}$ of the origin in $\mathbb{R}^4$ that excludes $y$ and is
invariant under $A$.  Note though that $A$ is not contractive with
respect to either the $1$-norm or the $2$-norm, so we cannot simply
take $\mathcal{N}$ to be a ball of suitably small radius with
respect to either of these norms.  However, for $\varepsilon>0$, the
set
\[ \mathcal{N}_{\varepsilon} = \left\{u \in \mathbb{R}^4 : u_1^2 + u_2^2 \leq
  \varepsilon^2 \wedge u_3^2+u_4^2 \leq
  \textstyle\frac{1}{16}\varepsilon^2 \right\} \]
\emph{is} invariant under $A$.  Thus we obtain a semialgebraic invariant
as the union of $\mathcal{N}_{\varepsilon}$, where $\varepsilon$ is
chosen sufficiently small such that $y\not\in \mathcal{N}_{\varepsilon}$,
together with an (easily computable) initial segment of the orbit
$x,Ax,A^2x,\ldots$ comprising all points in the orbit that lie outside
$\mathcal{N}_{\varepsilon}$.
\end{example}

\begin{example}
\label{ex:three}
Consider the following scaled version of the matrix from the previous example:
\[ A=
\frac{1}{5} \begin{pmatrix}
       4 & -3 & 4 & -3\\[2pt]
        3 & 4 & 3 & 4  \\[2pt]
               0 & 0 & 4 & -3 \\[2pt]
0 & 0 & 3 & 4
\end{pmatrix} \, .\]  Note that $A$ is a
non-diagonalisable matrix with spectral radius $1$.  
Example~\ref{ex:one} concerned an orthogonal matrix, while
the matrix in
Example~\ref{ex:two} was (morally speaking, if not literally)
length-decreasing.  Here, by contrast, the idea is to identify a subset
$\mathcal{Q}\subseteq \mathbb{R}^4$ that is invariant under $A$,
together with a ``length measure'' $f:\mathcal{Q}\rightarrow\mathbb{R}$
that increases under application of $A$.  Fixing a constant $c>0$, such
a set is
\[ \mathcal{Q} = \left\{
u \in \mathbb{R}^4 : u_1^2+u_2^2 \geq c \wedge u_1u_3+u_2u_4
\geq 0 \right\} \]
with length measure $f(u)=u_1^2+u_2^2$.  A key property of
$\mathcal{Q}$ is that for any vector $x\in \mathbb{R}^4$ such that
$x_3\neq 0$ or $x_4\neq 0$, the orbit $x,Ax,A^2x,\ldots$ eventually
enters $\mathcal{Q}$.  By choosing $c$ suitably large, we can exclude
$y$ from $\mathcal{Q}$.  Thus we obtain an invariant as the union of
$\mathcal{Q}$ and an appropriate finite intitial segment of the orbit
$x,Ax,A^2x,\ldots$.
\end{example}

We would like to draw the reader's attention to the critical role
played by the underlying domain $\mathbf{D}$ of potential
invariants. In the examples above as well as the rest of this paper,
we focus exclusively on the domain of semialgebraic sets. However one
might naturally consider instead the domain of \emph{semilinear} sets,
\emph{i.e.}, sets defined by Boolean combinations of linear inequalities with
integer coefficients, or equivalently consisting of finite unions of
(bounded or unbounded) rational polytopes. As pointed out above, in
Example~1 no non-trivial instance admits a semilinear invariant,
whereas one can show that in Example~2 semilinear invariants can
always be found. Interestingly, the question of relative completeness 
(\emph{i.e.}, determining in general whether or not a suitable semilinear
invariant exists in non-reachability instances) is not known to be
decidable, and appears to be a challenging problem.

\section{Preliminaries}
It is convenient in this paper to work over the field of (complex) algebraic numbers,
denoted $\A$. All standard algebraic operations, such as sums,
products, root-finding of polynomials and computing Jordan normal
forms of matrices with algebraic entries can be performed effectively;
we refer the reader to~\cite{COW16} for more details on the matter.

An \emph{instance of the Orbit Problem}, or \emph{Orbit instance} for
short, is given by a square matrix $A \in \A^{d \times d}$ and two
vectors $x, y \in \A^d$. The triple $(A,x,y)$ is a 
\emph{reachability} instance if there is
$n \in \N$ such that $A^n x = y$, and otherwise is
a \emph{non-reachability} instance.

We are interested in non-reachability certificates given as invariants.
Formally, given an Orbit instance $(A,x,y)$ in dimension $d$, a
set $\mathcal{P} \subseteq \mathbb{C}^d$ is a \emph{non-reachability invariant} if
$A \mathcal{P} \subseteq \mathcal{P}$, $x \in \mathcal{P}$, and $y \notin \mathcal{P}$.

For the remainder of this paper, we focus
on \emph{semialgebraic} invariants. Identifying $\mathbb{C}^d$ with
$\mathbb{R}^{2d}$, a set $\mathcal{P}$ is semialgebraic if it is the
set of real solutions of some Boolean combination of polynomial
inequalities with integer coefficients.

%We extend the notion of semialgebraic sets to subsets of
%$\mathbb{C}^n$ using the natural identification of $\mathbb{C}$ with
%$\mathbb{R}^2$:  we say that $S\subseteq \mathbb{C}^n$ is
%semialgebraic if there exists a semialgebraic set
%$R\subseteq \mathbb{R}^{2n}$ such that
%\[ S = \{ z \in \mathbb{C}^n : (\mathrm{Re}(z_1),\mathrm{Im}(z_1),\ldots,
%\mathrm{Re}(z_n),\mathrm{Im}(z_n)) \in R \} \, . \]
 
A central result about semialgebraic sets is the Tarski-Seidenberg
Theorem: if $S\subseteq\mathbb{R}^{n+1}$ is semialgebraic then the
image $\pi(S)$ under the projection
$\pi:\mathbb{R}^{n+1}\rightarrow\mathbb{R}^n$, where
$\pi(x_1,\ldots,x_{n+1})=(x_1,\ldots,x_n)$, is also semialgebraic.
Among the consequences of this result is the fact that the topological
closure of a semialgebraic set (in either $\mathbb{R}^n$ or
$\mathbb{C}^n$) is again semialgebraic.

\section{Semialgebraic Invariants}
Our main result is the following.

\begin{theorem}\label{thm:main}
It is decidable whether an Orbit instance admits a semialgebraic invariant.
Furthermore, there exists an algorithm which constructs such an invariant when it exists,
and the invariant produced has polynomial-size description.
\end{theorem}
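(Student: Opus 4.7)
The plan is to prove Theorem~\ref{thm:main} by a detailed case analysis driven by the spectral structure of $A$, mirroring the trichotomy suggested by Examples \ref{ex:one}--\ref{ex:three}. I begin by computing the Jordan Normal Form of $A$ over $\A$ (effectively, by \cite{COW16}) and decomposing the ambient space as $V_{<1}\oplus V_{=1}\oplus V_{>1}$ according to the modulus of the eigenvalues, with $V_{=1}$ further split into a diagonalisable part $V_{=1}^{\mathrm{d}}$ and a non-diagonalisable part $V_{=1}^{\mathrm{nd}}$. The idea is to build an invariant on each block and then combine them, augmented with a suitable finite initial segment of the orbit, to obtain a semialgebraic invariant for the whole instance.

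Next, I handle the three ``easy'' regimes. For the contracting part $V_{<1}$, I adapt Example~\ref{ex:two}: using the Jordan structure I construct a block-weighted quadratic form $Q$ for which $\{u:Q(u)\le \varepsilon\}$ is stable under $A$ for every $\varepsilon>0$, and pick $\varepsilon$ small enough to exclude the projection of $y$. For the expanding part $V_{>1}$, orbit components grow unboundedly while the target is fixed, so an outer region $\{Q(u)\ge R\}$ (dually weighted) does the job for large $R$. For $V_{=1}^{\mathrm{nd}}$, I mimic Example~\ref{ex:three}: nilpotent contributions force polynomial growth along certain directions, and one can exhibit a semialgebraic region $\mathcal Q$ together with a quadratic length measure $f$ that is strictly increasing along $A$-orbits in $\mathcal Q$, with the property that every orbit with nontrivial component in this block enters $\mathcal Q$ in a bounded number of steps; then $\mathcal Q\cap\{f\ge c\}$ excludes $y$ for $c$ large.

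The core difficulty, and the main obstacle, is the diagonalisable unit-modulus block $V_{=1}^{\mathrm{d}}$, where the orbit lies densely in a compact real algebraic torus $T\subseteq V_{=1}^{\mathrm{d}}$ whose dimension is governed by the multiplicative relations among the eigenvalues. Here I invoke Masser's theorem to compute a basis for the lattice of multiplicative relations among the eigenvalues in polynomial space, which yields a polynomial-size semialgebraic description of $T=\overline{\mathcal O(x)}$. Two subcases then arise: if the projection $\pi(y)$ of $y$ onto $V_{=1}^{\mathrm{d}}$ lies outside $T$, the closure $T$ itself lifts to a semialgebraic invariant; if $\pi(y)\in T$, then by Kronecker's theorem on inhomogeneous simultaneous Diophantine approximation, $\pi(y)$ is approached arbitrarily closely by the orbit projection. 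In the latter case I show that any semialgebraic $\mathcal P$ containing the orbit must, by the semialgebraic closure property from Tarski--Seidenberg combined with a local structural argument on semialgebraic sets near an accumulation point, also contain $\pi(y)$; provided the spectral data ensures no other block can ``separate'' $y$, this yields the characterisation of non-reachability instances admitting no semialgebraic invariant.

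Having assembled the building blocks, I combine them by intersecting the pullbacks of the per-block invariants, taking the union with a computable finite initial segment $\{x,Ax,\ldots,A^N x\}$ that covers orbit points not yet inside the combined invariant region. The bound $N$ on this segment, together with the sizes of the quadratic forms and thresholds chosen, are all polynomial in the input, yielding the polynomial-size guarantee. The decision procedure consists of computing the spectral decomposition, running Masser's algorithm on the unit-modulus eigenvalues, checking membership of $\pi(y)$ in $T$, and testing orbit membership for the initial segment, all feasible in polynomial space. I expect the subtlest step to be the impossibility direction on $V_{=1}^{\mathrm{d}}$: rigorously arguing that every semialgebraic superset of the orbit must contain its topological closure requires a careful use of the dimension theory of semialgebraic sets, ruling out the possibility of carving out a semialgebraic ``hole'' around $\pi(y)$ while still covering every orbit point.
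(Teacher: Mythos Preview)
Your overall strategy---Jordan normal form, Kronecker and Masser for the orbit closure on the unit-modulus diagonal part, and dimension theory of semialgebraic sets for the impossibility direction---matches the paper's. However, the paper organises the case analysis quite differently, and your version has two gaps worth flagging.

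First, the paper does \emph{not} decompose the space as $V_{<1}\oplus V_{=1}^{\mathrm{d}}\oplus V_{=1}^{\mathrm{nd}}\oplus V_{>1}$ and build per-block invariants to be intersected. Instead it uses a purely disjunctive case split: if there is \emph{any} Jordan block with $|\lambda|\neq 1$, that single block already suffices to produce an invariant (the remaining coordinates are left unconstrained); likewise, if all eigenvalues have modulus~$1$ but some block is non-diagonal, that one block suffices. Only when the \emph{entire} matrix is diagonal with unit-modulus eigenvalues does one invoke the torus $T$ and the Masser/Kronecker machinery, and this is the only case in which an invariant can fail to exist. Your intersection approach is not wrong in principle, but it forces you into degeneracies you never address: if the projection of $y$ onto $V_{<1}$ is zero then no $\varepsilon>0$ excludes it; if the projection of $x$ onto $V_{>1}$ is zero then the orbit there does not grow; if $x$ has only its first coordinate nonzero in a non-diagonal block then the length measure does not increase. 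The paper dispatches all such cases up front via a dedicated reduction to ``non-trivial'' instances (Lemma~\ref{lem:reduc}), which you would also need before any of your per-block constructions are well-defined.

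Second, and more seriously, your impossibility argument is incomplete. Showing that a semialgebraic $\mathcal{P}\supseteq\mathcal{O}$ has full dimension in $\overline{\mathcal{O}}$ does not by itself force $\overline{\mathcal{O}}\subseteq\mathcal{P}$: a semialgebraic set can contain a dense sequence and still omit its limit, so a ``local structural argument near an accumulation point'' does not suffice. The paper's decisive extra idea is to exploit the \emph{group} structure of the torus $T_A=\overline{\{A^n:n\in\mathbb{N}\}}$: for any $z\in\overline{\mathcal{O}}$ one observes that $T_{A^{-1}}=T_A$, hence the \emph{backward} orbit $\{A^{-n}z:n\ge 0\}$ is again dense in $\overline{\mathcal{O}}=\overline{\mathcal{P}}$; the dimension lemma (Lemma~\ref{fact}) then yields $A^{-n}z\in\mathcal{P}$ for some $n$, and forward invariance of $\mathcal{P}$ gives $z\in\mathcal{P}$. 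Without this backward-orbit trick the inclusion $\overline{\mathcal{O}}\subseteq\mathcal{P}$ is not established.
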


The remainder of the paper is devoted to proving Theorem~\ref{thm:main}. To this end, 
let $\ell = (A,x,y)$ be a non-reachability Orbit instance in dimension
$d$.\footnote{Kannan and Lipton showed the decidability of
  reachability for Orbit instances over rational numbers; their proof
  carries over to instances with algebraic entries, however without
  the polynomial-time complexity.}

As a first step, recall that every matrix $A$ can be written in the
form $A=Q^{-1} J Q$, where $Q$ is invertible and $J$ is in Jordan
normal form.  The following lemma transfers semialgebraic invariants
through the change-of-basis matrix $Q$.

\begin{lemma}\label{lem:basis}
Let $\ell = (A,x,y)$ be an Orbit instance, and $Q$ an invertible matrix in $\A^{d \times d}$.

Construct the Orbit instance $\ell_Q = (Q A Q^{-1},Q x, Q y)$.
Then $\P$ is a semialgebraic invariant for $\ell_Q$
if, and only if, $Q^{-1} \P$ is a semialgebraic invariant for $\ell$.
\end{lemma}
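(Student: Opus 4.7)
The proof divides into two essentially independent verifications: (i) that the three defining conditions of a non-reachability invariant transfer correctly under the change of basis, and (ii) that semialgebraicity is preserved by applying $Q$ or $Q^{-1}$.

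For (i), I would simply unfold the definitions. Since $Q^{-1}\P = \{v : Qv \in \P\}$, the condition $Qx \in \P$ is immediately equivalent to $x \in Q^{-1}\P$, and likewise $Qy \notin \P$ corresponds to $y \notin Q^{-1}\P$. For stability, the identity $A(Q^{-1}\P) = Q^{-1}((QAQ^{-1})\P)$ shows that $A(Q^{-1}\P) \subseteq Q^{-1}\P$ is equivalent to $(QAQ^{-1})\P \subseteq \P$. Each equivalence is bidirectional and uses only the invertibility of $Q$, so both directions of the lemma's iff follow simultaneously.

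The substantive point is (ii). If $\P$ is defined by a Boolean combination of polynomial inequalities $p_i(u) \geq 0$ with integer coefficients, then $Q^{-1}\P$ satisfies the Boolean combination obtained by substituting $u = Qv$, yielding polynomial relations in $v$ whose coefficients are algebraic (inherited from the entries of $Q$, together with $\sqrt{-1}$ if one identifies $\C^d$ with $\R^{2d}$) rather than integer. To recast the result as a semialgebraic set in the sense of the paper, I would introduce fresh real variables standing for the finitely many algebraic numbers appearing in $Q$, constrain each by its minimal polynomial (an integer-coefficient relation) together with enough sign or interval data to pin down the correct real root, and then invoke the Tarski--Seidenberg theorem recalled in the Preliminaries to eliminate those auxiliary variables by projection. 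The converse direction is identical, with the roles of $Q$ and $Q^{-1}$ swapped.

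I do not anticipate a real obstacle here: the lemma is essentially formal, and the only minor subtlety is the passage between polynomials with algebraic coefficients and polynomials with integer coefficients, which is handled by the Tarski--Seidenberg projection trick sketched above.
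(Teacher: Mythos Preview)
Your proposal is correct and follows essentially the same approach as the paper: verify the three invariant conditions by direct unfolding, and observe that semialgebraicity is preserved under the linear change of coordinates. The paper simply asserts the latter without justification, so your Tarski--Seidenberg argument for passing from algebraic to integer coefficients is a useful elaboration rather than a genuinely different route.
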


\begin{proof}
First of all, $Q^{-1} \P$ is semialgebraic if, and only if, $\P$ is semialgebraic.
We have:
\begin{itemize}
\item $Q A Q^{-1} \P \subseteq \P$ if, and only if, $A Q^{-1} \P \subseteq Q^{-1} \P$,
  	\item $Q x \in \P$ if, and only if, $x \in Q^{-1} \P$,
	\item $Q y \notin \P$, if, and only if, $y \notin Q^{-1} \P$.
\end{itemize} 
This concludes the proof.
\end{proof}

Thanks to Lemma~\ref{lem:basis}, we can reduce the problem of the
existence of semialgebraic invariants for Orbit instances to cases
in which the matrix is in Jordan normal form, \emph{i.e.}, is a diagonal
block matrix, where the blocks (called Jordan blocks) are of the form:
\[
\begin{bmatrix}
\lambda & 1            & \;     & \;  \\
\;        & \lambda    & \ddots & \;  \\
\;        & \;         & \ddots & 1   \\
\;        & \;         & \;     & \lambda       
\end{bmatrix}
\]
Note that this transformation can be achieved in polynomial time~\cite{Cai00,CLZ00}.

Formally, a Jordan block is a matrix $\lambda I + N$ with $\lambda \in
\A$, $I$ the identity matrix and $N$ the matrix with $1$'s on the
upper diagonal, and $0$'s everywhere else.  The number $\lambda$ is an
eigenvalue of $A$.  A Jordan block of dimension one is called
diagonal, and $A$ is diagonalisable if, and only if, all Jordan blocks
are diagonal.

The $d$ dimensions of the matrix $A$ are indexed by pairs $(J,k)$,
where $J$ ranges over the Jordan blocks and $k \in
\set{1,\ldots,\delta}$ where $\delta$ is the dimension of the Jordan
block $J$.  For instance, if the matrix $A$ has two Jordan blocks,
$J_1$ of dimension $1$ and $J_2$ of dimension $2$, then the three
dimensions of $A$ are $(J_1,1)$ (corresponding to the Jordan block
$J_1$) and $(J_2,1),(J_2,2)$ (corresponding to the Jordan block
$J_2$).

For a vector $v$ and a subset $S$ of $\set{1,\ldots,d}$, we denote $v_S$ the projection vector of $v$ on the dimensions in $S$,
and extend this notation to matrices.
As a special case, $v_{J,>k}$ denotes the vector restricted to the coordinates of the Jordan block $J$ whose index is greater than $k$.
We denote $\overline{S}$ the complement of $S$ in $\set{1,\ldots,d}$.

\medskip
There are a few degenerate cases which we handle now.
We say that an Orbit instance $\ell = (A,x,y)$ in Jordan normal form is non-trivial if:
\begin{itemize}
	\item There is no Jordan block associated with the value $0$, or equivalently $A$ is invertible,
	\item For each Jordan block $J$, both $x_J$ and $y_J$ are not the zero vector,
	\item For each non-diagonal Jordan block $J$, the vector $x_J$ has at least a non-zero coordinate other than the first one,
	\emph{i.e.}, $x_{J,>1}$ is not the zero vector.
\end{itemize}

\begin{lemma}\label{lem:reduc}
The existence of semialgebraic invariants for Orbit instances reduces in poylnomial time 
to the same problem for non-trivial Orbit instances in Jordan normal form.
\end{lemma}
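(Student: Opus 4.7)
The plan is to start with the Jordan normal form reduction from Lemma~\ref{lem:basis} and then process each of the three non-triviality conditions in succession. At each step we either produce a semialgebraic invariant outright (settling the decision problem on the spot) or strictly shrink the instance by removing a block or decreasing its size. Since the Jordan decomposition of a matrix with algebraic entries is polynomial-time computable~\cite{Cai00,CLZ00} and each subsequent reduction is elementary linear algebra, iterating the whole process terminates after at most $d$ rounds in polynomial time.

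Consider first the zero-eigenvalue condition: suppose $A$ has a Jordan block $J_0$ of eigenvalue $0$ and size $k$. Then $A^n$ vanishes on the $J_0$-coordinates whenever $n \geq k$, so the orbit enters the linear subspace $V = \{v : v_{J_0} = 0\}$ after at most $k$ steps. After verifying that $y \neq A^n x$ for $n < k$ (else the instance is reachable), we split on whether $y \in V$: if not, then $V \cup \{x, Ax, \ldots, A^{k-1} x\}$ is itself a semialgebraic invariant excluding $y$; otherwise we reduce to $\ell' = (A_{\overline{J_0}}, (A^k x)_{\overline{J_0}}, y_{\overline{J_0}})$, with the correspondence given by lifting any invariant $\mathcal{P}'$ of $\ell'$ to $\{v : v_{\overline{J_0}} \in \mathcal{P}',\ v_{J_0} = 0\} \cup \{x, Ax, \ldots, A^{k-1}x\}$ and conversely recovering $\mathcal{P}'$ as the $\overline{J_0}$-projection of the $V$-slice of an invariant of $\ell$.

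After this step $A$ is invertible, and the remaining two conditions are handled by similar dichotomies. If $x_J = 0$ for some block $J$, the orbit lies in $\{v : v_J = 0\}$, which is either a direct invariant (when $y_J \neq 0$) or lets us reduce by removing $J$ (when $y_J = 0$). If $y_J = 0$ but $x_J \neq 0$, then invertibility of $A|_J$ forces $(A^n x)_J \neq 0$ for all $n$, and the semialgebraic set $\{v : v_J \neq 0\}$ is itself a suitable invariant. Finally, if a non-diagonal block $J$ with eigenvalue $\lambda$ has $x_{J,>1} = 0$, the orbit remains in $\{v : v_{J,>1} = 0\}$, on which $A$ effectively acts by scalar multiplication by $\lambda$ on the first $J$-coordinate; this subspace is an invariant excluding $y$ when $y_{J,>1} \neq 0$, and otherwise we reduce $J$ to a one-dimensional diagonal block with eigenvalue $\lambda$.

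The main obstacle is verifying, for each of these reductions, the precise equivalence between semialgebraic invariants of the original and the reduced instance. This amounts to checking case by case that the lift-and-slice operations preserve semialgebraicity, stability under $A$, and the membership conditions on $x$ and $y$, and that the reduced instance remains a non-reachability instance. These verifications are routine but need care, particularly in the zero-eigenvalue case where the lift combines a linear subspace with a finite initial orbit segment.
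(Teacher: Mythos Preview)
Your proposal is correct and follows essentially the same case analysis as the paper's proof: reduce to Jordan normal form via Lemma~\ref{lem:basis}, then handle zero-eigenvalue blocks, blocks with $x_J=0$ or $y_J=0$, and non-diagonal blocks with $x_{J,>1}=0$, in each sub-case either exhibiting an explicit semialgebraic invariant or shrinking the instance. Your invariant $\{v : v_J \neq 0\}$ in the sub-case $x_J \neq 0$, $y_J = 0$ is in fact slightly simpler than the paper's choice $\{z : z_{J,k} \neq 0 \text{ and } z_{J,>k} = 0\}$, but both work once $A$ has been made invertible.
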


\begin{proof}
Let $\ell = (A,x,y)$ be an Orbit instance in Jordan normal form.

\begin{itemize}
	\item If $A$ is not invertible, we distinguish two cases.
		\begin{itemize}
			\item If for some Jordan block $J$ associated with the eigenvalue $0$, we have that $y$ is not the zero vector,
			\emph{i.e.}, $y_J \neq 0$,
			then consider $\P = \set{x, Ax, \ldots, A^{d-1}x} \cup \set{z \in \C^d \mid z_J = 0}$ is a semialgebraic invariant.
			Indeed, the Jordan block $J$ is nilpotent, so for any vector $u$ and $n \ge d$, we have that $J^n u = 0$,
			so in particular $(A^n x)_J = 0$.
			Moreover, since by assumption $y$ is not reachable, it is not one of $A^n x$ for $n < d$, 
			and $y_J \neq 0$, so $y \notin \P$.			
			
			\item Otherwise, denote $J$ the dimensions corresponding to Jordan blocks associated with the eigenvalue $0$,
			we have that $y_J = 0$.
			Consider the Orbit instance $\ell_J = (A_{\overline{J}},(A^d x)_{\overline{J}},y_{\overline{J}})$.
			We claim that $\ell$ admits a semialgebraic invariant if, and only if, $\ell_J$ does.

			Let $\P$ be a semialgebraic invariant for $\ell$.
			Construct $\P_J$ the set of vectors $z$ in $\C^{\overline{J}}$ such that $z$ augmented with $0$'s in the $J$ dimensions yields a vector in $\P$,
			we argue that $\P_J$ is a semialgebraic invariant for $\ell_J$.
			Indeed, $(A^d x)_{\overline{J}} \in \P_J$ since $A^d x \in \P$
			and $(A^dx)_J=0$, because the Jordan block $J$ is nilpotent.
			The stability of $\P_J$ under $A_{\overline{J}}$ is clear, and $y_{\overline{J}} \notin \P_J$
			because $y_J = 0$, so $y_{\overline{J}} \in \P_J$ would imply $y \in \P$.

			Conversely, let $\P_J$ be a semialgebraic invariant for $\ell_J$, 
			extend it to $\P \subseteq \C^d$ by allowing any complex numbers in the $J$ dimensions,
			then $\set{x,Ax,\ldots,A^{d-1}x} \cup \P$ is a semialgebraic invariant for $\ell$.

			We reduced the existence of semialgebraic invariants from $\ell$ to $\ell_J$, with the additional property that the matrix is invertible.
		\end{itemize}

	\item Suppose $A$ contains a Jordan block $J$ such that either $x_J = 0$ or $y_J = 0$.  We distinguish three cases.
		\begin{itemize}
			\item If for some Jordan block $J$ we have $x_J = 0$ and $y_J \neq 0$, 
			then $\P = \set{z \in \C^d \mid z_J = 0}$ is a semialgebraic invariant for $\ell$.
	
			\item If for some Jordan block $J$ we have $x_J \neq 0$ and $y_J = 0$,
			let $k$ such that $x_{J,k} \neq 0$ and $x_{J,>k} = 0$,
			then $\P = \set{z \in \C^d \mid z_{J,k} \neq 0 \text{ and } z_{J,>k} = 0}$ is a semialgebraic invariant for $\ell$.
	
			\item Otherwise, denote $J$ the dimensions corresponding to Jordan blocks for which $x_J = y_J = 0$.
			Consider the Orbit instance $\ell_J = (A_{\overline{J}},x_{\overline{J}},y_{\overline{J}})$,
			we claim that $\ell$ admits a semialgebraic invariant if, and only if, $\ell_J$ does.
			
			Let $\P$ be a semialgebraic invariant for $\ell$.
			Construct $\P_J$ the set of vectors $z$ in $\C^{\overline{J}}$ such that $z$ augmented with $0$ in the $J$ dimensions yields
			a vector in $\P$, then $\P_J$ is a semialgebraic invariant for $\ell_J$.
			
			Conversely, let $\P_J$ be a semialgebraic invariant for $\ell_J$,
			extend it to $\P \subseteq \C^d$ by allowing only $0$ in the $J$ dimensions,
			then $\P$ is a semialgebraic invariant for $\ell$.

			We reduced the existence of semialgebraic invariants from $\ell$ to $\ell_J$, with the additional property that
			for each Jordan block $J$, both $x_J$ and $y_J$ are not the zero vector.
		\end{itemize}

	\item If $A$ contains a non-diagonal Jordan block $J$ such that the vector $x_J$ is zero except on the first coordinate $(J,1)$,
		we distinguish two cases.
		\begin{itemize}
			\item If for some non-diagonal Jordan block $J$ we have that $x_{J,>1} = 0$ and $y_{J,>1} \neq 0$,
			 then $\P = \set{z \in \C^d \mid z_{J,>1} = 0}$ is a semialgebraic invariant for $\ell$.
			\item Otherwise, denote $J$ the dimensions corresponding to non-diagonal Jordan blocks for which $x_{J,>1} = y_{J,>1} = 0$.
			Let $S = \overline{J} \cup \bigcup_J (J,1)$, \emph{i.e.}, the dimensions outside $J$ plus the first dimensions of each block in $J$.
			Consider the Orbit instance $\ell_S = (A_S,x_S,y_S)$,
			we claim that $\ell$ admits a semialgebraic invariant if, and only if, $\ell_S$ does.

			Let $\P$ be a semialgebraic invariant for $\ell$.
			Construct $\P_S$ the set of vectors $z$ in $\C^S$ such that $z$ augmented with $0$ in the $\overline{S}$ dimensions yields
			a vector in $\P$, then $\P_S$ is a semialgebraic invariant for $\ell_S$.
			
			Conversely, let $\P_S$ be a semialgebraic invariant for $\ell_S$,
			extend it to $\P \subseteq \C^d$ by allowing only $0$ in the $\overline{S}$ dimensions,
			then $\P$ is a semialgebraic invariant for $\ell$.

			We reduced the existence of semialgebraic invariants from $\ell$ to $\ell_S$, with the additional property that
			for each non-diagonal Jordan block $J$, $x_{J,>1}$ is not the zero vector.			
		\end{itemize}		
\end{itemize}
This concludes the proof.
\end{proof}

\subsection{Some eigenvalue has modulus different from one} 
\label{sec:neqone}
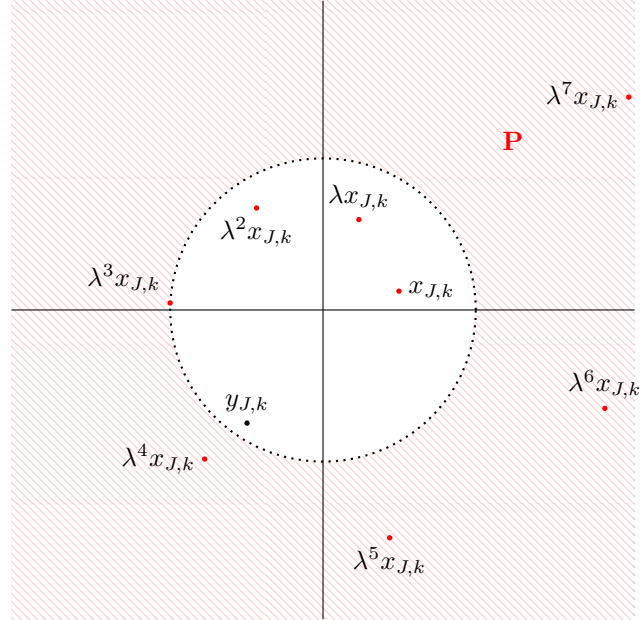
\begin{figure}[!ht]
	\centering
	\begin{tikzpicture}[scale=5/10]
	\def\ptsize{1.0pt}
	\coordinate[label=above left:O] (O) at (0,0);
	
	\coordinate (north) at (0,8.2);
	\coordinate (south) at (0,-8.2);
	\coordinate (east) at (8.2,0);
	\coordinate (west) at (-8.2,0);

	\draw[dotted, thick, fill=white]
		(0,0) circle[x radius = 4.02 cm, y radius = 4.02 cm];

	\coordinate (ne) at (8.2,8.2);
	\coordinate (nw) at (-8.2,8.2);
	\coordinate (sw) at (-8.2,-8.2);
	\coordinate (se) at (8.2,-8.2);

	\coordinate[label=right:$x_{J,k}$] (x0) at (2,0.5);
	\coordinate[label=above:$\lambda x_{J,k}$] (x1) at (0.945,2.397);
	\coordinate[label=below:$\lambda^2 x_{J,k}$] (x2) at (-1.749,2.704);
	\coordinate[label=above left:$\lambda^3 x_{J,k}$] (x3) at (-4.022,0.187);
	\coordinate[label=left:$\lambda^4 x_{J,k}$] (x4) at (-3.115,-3.953);
	\coordinate[label=below:$\lambda^5 x_{J,k}$] (x5) at (1.754,-6.041);
	\coordinate[label=above:$\lambda^6 x_{J,k}$] (x6) at (7.418,-2.609);
	\coordinate[label=left:$\lambda^7 x_{J,k}$] (x7) at (8.047,5.645);
	
	\coordinate[label=above:\textcolor{red}{$\mathbf P$}] (P) at (5,4);

	\coordinate[label=above:$y_{J,k}$ ] (y) at (-2,-3);
	\draw (south) -- (north);
	\draw (west) -- (east);
	
	\fill (y) circle (2.0pt);
	
	\foreach \p in {x0, x1, x2, x3, x4, x5, x6, x7}
		\fill[red] (\p) circle (2.0pt);
	
		\begin{scope}[on background layer]
	\fill[pattern=north west lines, pattern color=red!20] (ne.center)--(nw.center)--(sw.center)--(se.center);
	\end{scope}
\end{tikzpicture}
	\caption{Case $|\lambda| > 1$. This figure represents the complex plane, which is the projection on the coordinate $(J,k)$.}
	\label{fig:lambdagr1semial}
\end{figure}

\begin{lemma}
Let $\ell = (A,x,y)$ be a non-trivial Orbit instance in Jordan normal form.
Assume that $\ell$ is a non-reachability instance.
If the matrix $A$ has an eigenvalue whose modulus is not equal to $1$,
then there exists a semialgebraic invariant for $\ell$.
\end{lemma}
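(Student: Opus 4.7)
The plan is to split into the sub-cases $|\lambda| > 1$ and $|\lambda| < 1$ and in each to exhibit an explicit semialgebraic set $\mathcal{Q}$ that is $A$-stable, is eventually entered by the orbit, and excludes $y$. The invariant will then be
\[
  \mathcal{P} \;=\; \mathcal{Q} \;\cup\; \{x, Ax, \ldots, A^{N-1}x\},
\]
where $N$ is the smallest index with $A^N x \in \mathcal{Q}$; since the instance is a non-reachability one, $y$ does not occur in the finite prefix, so $\mathcal{P}$ will be a valid non-reachability invariant, and it is semialgebraic as the union of a semialgebraic set and finitely many points.

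In the expanding sub-case, let $J$ be a Jordan block with eigenvalue $\lambda$, $|\lambda|>1$, and let $k$ be the largest index with $x_{J,k}\neq 0$, which exists by non-triviality. Since the block acts as $\lambda I + N$ with $N$ the upper shift, the subspace $\{z : z_{J,>k} = 0\}$ is $A$-invariant, and on it the $(J,k)$-coordinate is simply multiplied by $\lambda$, giving $(A^n x)_{J,k} = \lambda^n x_{J,k}$ of modulus tending to infinity. I would take
\[
  \mathcal{Q} \;=\; \bigl\{ z \in \C^d : z_{J,>k} = 0 \ \wedge \ |z_{J,k}|^2 \geq R^2 \bigr\}
\]
for a rational $R$ large enough that $|y_{J,k}|<R$ (automatic when $y_{J,>k}\neq 0$). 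Stability is immediate: for $z\in\mathcal{Q}$, the coordinates $(Az)_{J,>k}$ remain zero and $|(Az)_{J,k}| = |\lambda||z_{J,k}| > R$.

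In the contracting sub-case, let $J$ be a Jordan block with eigenvalue $\lambda$, $|\lambda|<1$, of dimension $\delta$. The obstacle is that $J = \lambda I + N$ need not be a contraction in the Euclidean norm (cf.\ Example~\ref{ex:two}), so I would pass to a Jordan-weighted norm: fix a rational $\epsilon \in (0, 1-|\lambda|)$, set $D_\epsilon = \mathrm{diag}(1, \epsilon, \ldots, \epsilon^{\delta-1})$, and define $\|u\|_\epsilon := \|D_\epsilon^{-1} u\|_2$. A direct computation gives $D_\epsilon^{-1} J D_\epsilon = \lambda I + \epsilon N$, whose Euclidean operator norm is at most $|\lambda| + \epsilon < 1$, so $J$ is a strict contraction in $\|\cdot\|_\epsilon$. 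I would then take
\[
  \mathcal{Q} \;=\; \Bigl\{ z \in \C^d : {\textstyle\sum_{i=1}^{\delta} \epsilon^{-2(i-1)} |z_{J,i}|^2 \leq R^2} \Bigr\}
\]
for any rational $R$ with $0 < R < \|y_J\|_\epsilon$, which exists since $y_J \neq 0$ by non-triviality. Stability of $\mathcal{Q}$ under $A$ follows from the block-diagonal structure $(Az)_J = J z_J$ combined with the contraction property, and the orbit enters $\mathcal{Q}$ because $J^n x_J \to 0$.

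The main obstacle is the design of $\mathcal{Q}$ in each sub-case: in the expanding case one must arrange that a single coordinate's modulus grows in a controlled way, which forces the ``last non-zero coordinate'' restriction so as to kill the interference coming from the nilpotent $N$; in the contracting case one must pass to a Jordan-weighted Euclidean norm so that $N$ becomes an $\epsilon$-perturbation that the strict bound $|\lambda|<1$ can absorb. Once $\mathcal{Q}$ is in place, checking the invariant conditions and converting the defining inequalities to polynomial ones with integer coefficients (by clearing rational denominators) is routine.
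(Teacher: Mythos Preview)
Your proof is correct and follows the same overall strategy as the paper: split on $|\lambda|>1$ versus $|\lambda|<1$, and in each case build a semialgebraic set $\mathcal{Q}$ that is $A$-stable, excludes $y$, and is eventually entered by the orbit, then prepend the finite initial segment. The expanding case is identical to the paper's construction.

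In the contracting case there is a minor but genuine variation. The paper takes a weighted $\ell^\infty$-box
\[
\mathcal{Q} \;=\; \bigl\{\, z : |z_{J,k}| \le (1-|\lambda|)^{k}\,\|y_J\|_\infty \text{ for all } k \,\bigr\}
\]
and verifies stability coordinate by coordinate via $|\lambda z_{J,k}+z_{J,k+1}| \le |\lambda|(1-|\lambda|)^k\|y_J\|_\infty + (1-|\lambda|)^{k+1}\|y_J\|_\infty = (1-|\lambda|)^k\|y_J\|_\infty$. You instead conjugate $J$ by $D_\epsilon = \mathrm{diag}(1,\epsilon,\ldots,\epsilon^{\delta-1})$ to obtain $\lambda I + \epsilon N$, which is a strict Euclidean contraction, and take the corresponding weighted $\ell^2$-ellipsoid. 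Both encode the same idea---rescale coordinates geometrically so that the nilpotent part is dominated by the factor $|\lambda|<1$---but your formulation via conjugation is more conceptual and makes the stability a one-line operator-norm estimate, while the paper's box keeps each defining inequality dependent on a single complex coordinate.
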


\begin{proof}
We distinguish two cases according to whether there exists an
eigenvalue of modulus strictly more than $1$ or en eigenvalue of
modulus strictly less than $1$.
\begin{itemize}
	\item Suppose that $A$ contains a Jordan block $J$ associated with an eigenvalue $\lambda$ with $|\lambda| > 1$. 

	In this case, some coordinate of $(A^n x)_{n \in \N}$ diverges
        to infinity, so eventually gets larger in modulus than the corresponding
        coordinate in $y$.  This allows us to construct a
        semialgebraic invariant for $\ell$ by taking the first points
        and then all points having a large coordinate in the diverging
        dimension.  This case is illustrated in Figure
        \ref{fig:lambdagr1semial}.
	
	By assumption $x_J$ is non-zero, let $k$ such that $x_{J,k} \neq 0$ and $x_{J,>k} = 0$
	(Note that if the Jordan block $J$ is diagonal, $k = 1$).
	For all $n \in \N$, we have $(A^n x)_{J,k} = \lambda^n x_{J,k}$, so $|(A^n x)_{J,k}|$ diverges to infinity.
	It follows that there exists $n_0 \in \N$ such that $|(A^{n_0} x)|_{J,k} > |y_{J,k}|$.
	Let
\[
\P = \set{x, A x,\ldots,A^{n_0-1} x} \cup \set{z \in \C^d \mid |z_{J,k}| \geq |(A^{n_0} x)_{J,k}| \text{ and } z_{J,>k} = 0}.
\]

	We argue that $\P$ is a semialgebraic invariant for $\ell$.
	The non-trivial point is that $\P$ is stable under $A$.
	Note that $(A^{n_0} x)_{J,>k} = 0$, so $A^{n_0} x \in \P$.
	Let $z \in \C^d$ such that $|z_{J,k}| \geq |(A^{n_0} x)_{J,k}|$ and $z_{J,>k} = 0$.
	Then $(A z)_{J,k} = \lambda z_{J,k}$ and $(A z)_{J,>k} = 0$, so $A z \in \P$.

	\item If $A$ contains a Jordan block $J$ associated with an eigenvalue $\lambda$ with $|\lambda| < 1$. 

	The situation is similar to the former, except that the convergence is towards the origin.
	The construction of the semialgebraic invariant is much more subtle though, for the following reason:
	for $k$ such that $x_{J,k} \neq 0$ and $x_{J,>k} = 0$, we may have that $y_{J,k} = 0$,
	implying that $((A^n x)_{J,k})_{n \in \N}$ does not become smaller than $y_{J,k}$. 
	Working on another dimension implies to give up the following diagonal behaviour: $(A^n x)_{J,k} = \lambda^n x_{J,k}$,
	making it hard to find a stable set under $A$.
	To overcome this problem, the invariant we define depends upon all the coordinates of the Jordan block $J$.

	Denote $d(J)$ the dimension of the Jordan block $J$.
	We have that $((A^n x)_J)_{n \in \N}$ converges to $0$. 
	It follows that there exists $n_0 \in \N$ such that for each dimension $k$ of the Jordan block $J$,
	\emph{i.e.}, for $k \in \set{1,\ldots,d(J)}$, we have $|(A^{n_0} x)_{J,k}| \le (1 - |\lambda|)^k \cdot ||y_J||_\infty$.

	Let
\[
\P = \set{x, A x,\ldots,A^{n_0-1} x} \cup \set{z \in \C^d \mid \forall k \in \set{1,\ldots,d(J)}, |z_{J,k}| \leq (1 - |\lambda|)^k \cdot ||y_J||_\infty}.
\]

	We argue that $\P$ is a semialgebraic invariant for $\ell$.
	Note that $y \notin \P$ since for $k$ such that $||y_J||_\infty = |y_{J,k}|$, this would imply $||y_J||_\infty \leq (1 - |\lambda|)^k \cdot ||y_J||_\infty$,
	which cannot be since $k \ge 1$, $y_J \neq 0$ and $|\lambda| < 1$.
	We examine the stability of $\P$ under $A$.
	Let $z \in \C^d$ such that for each dimension $k \in \set{1,\ldots,d(J)}$, we have $|z_{J,k}| \leq (1 - |\lambda|)^k \cdot ||y_J||_\infty$.
	Let $k < d(J)$, then
\[
\begin{array}{lll}
|(Az)_{J,k}| = |\lambda z_{J,k} + z_{J,k+1}| 
& \leq & |\lambda| |z_{J,k}| + |z_{J,k+1}| \\
& \leq & |\lambda| (1 - |\lambda|)^k \cdot ||y_J||_\infty + (1 - |\lambda|)^{k+1} \cdot ||y_J||_\infty \\
& =    & (|\lambda| + (1 - |\lambda|))(1 - |\lambda|)^k \cdot ||y_J||_\infty \\
& =    & (1 - |\lambda|)^k \cdot ||y_J||_\infty.
\end{array}
\]
	The case $k = d(J)$ is similar but easier.	
\end{itemize}
This concludes the proof.
\end{proof}

\subsection{All eigenvalues have modulus one and the matrix is not diagonalisable} 
\label{sec:nondiag}
\begin{lemma}
Let $\ell = (A,x,y)$ be a non-trivial Orbit instance in Jordan normal
form and assume that $\ell$ is a non-reachability instance.  If all
the eigenvalues of the matrix $A$ have modulus $1$ and $A$ is not
diagonalisable, then there exists a semialgebraic invariant for
$\ell$.
\end{lemma}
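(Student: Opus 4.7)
The plan is to mirror the strategy used in the $|\lambda|>1$ subcase of Section~\ref{sec:neqone}: I will produce a semialgebraic invariant as the union of a short initial segment of the orbit with an $A$-stable ``absorbing'' region into which the orbit eventually falls. The difficulty relative to that subcase is that when $|\lambda|=1$ no coordinate of $A^n x$ grows in modulus on its own; growth arises purely from the Jordan chain structure and is coupled across the coordinates of a block. Fix a non-diagonal Jordan block $J$ of $A$ with eigenvalue $\lambda$ (so $|\lambda|=1$ and $d(J) \geq 2$), and, using non-triviality, let $k \ge 2$ be the largest index with $x_{J,k}\neq 0$; set $m=k-1$. Expanding $A^n=(\lambda I+N)^n$ on $J$ shows $(A^n x)_{J,k}=\lambda^n x_{J,k}$, $(A^n x)_{J,>k}=0$, and $(A^n x)_{J,m}=\lambda^n x_{J,m}+n\lambda^{n-1} x_{J,k}$, so $|(A^n x)_{J,m}|$ diverges linearly in $n$ while $|(A^n x)_{J,k}|$ stays constant.

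The crux will be to exhibit a polynomial Lyapunov-type function that strictly increases along the orbit. I propose the asymmetric quadratic form
\[
Q(z) \;=\; \Re{\lambda\, z_{J,m}\,\overline{z_{J,k}}}.
\]
A one-line calculation (using $\lambda\overline{\lambda}=1$) shows that whenever $z_{J,>k}=0$, one has
\[
Q(Az) \;=\; Q(z) \,+\, |z_{J,k}|^2,
\]
so iterating along the orbit gives $Q(A^n x)=Q(x)+n|x_{J,k}|^2 \to +\infty$.

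With $Q$ in hand, I will take as invariant
\[
\P \;=\; \{x,Ax,\ldots,A^{n_0-1}x\} \,\cup\, S,\qquad S \;=\; \set{z\in\C^d : z_{J,>k}=0,\ |z_{J,k}|^2=|x_{J,k}|^2,\ Q(z)\ge c},
\]
where $c>Q(y)$ (vacuous if $y$ already violates one of the two equalities) and $n_0$ is chosen large enough that $A^{n_0} x\in S$. Verification of the invariant conditions is then direct: $x=A^0 x\in\P$; non-reachability excludes $y$ from the finite part and the choice of $c$ excludes it from $S$; and stability of $S$ reduces to the three identities $(Az)_{J,>k}=0$, $|(Az)_{J,k}|=|z_{J,k}|$, and $Q(Az)=Q(z)+|z_{J,k}|^2\ge c$. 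Semialgebraicity is immediate since $\lambda$, $|x_{J,k}|^2$ and $c$ are algebraic (algebraic coefficients can subsequently be eliminated in favour of integer ones by introducing auxiliary variables constrained by minimal polynomials), and the other Jordan blocks of $A$ are left unconstrained by $\P$, so they play no role in stability.

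The main obstacle will be the discovery of the right Lyapunov function $Q$. Naive candidates such as $|z_{J,m}|^2$ or $|z_{J,m}|^2-\alpha|z_{J,k}|^2$ fail to be monotone under $A$: their increments contain an ill-behaved cross term $\Re{\overline{\lambda z_{J,m}}\,z_{J,k}}$ of indefinite sign, so even though $|(A^n x)_{J,m}|$ tends to infinity, intermediate iterations may sharply decrease the Lyapunov value and thus violate stability. The asymmetric bilinear form $\Re{\lambda z_{J,m}\overline{z_{J,k}}}$ is engineered precisely so that, upon applying $A$, that cross term re-absorbs into $Q(z)$ and leaves the clean positive residue $|z_{J,k}|^2$; this single algebraic identity is what drives the whole argument.
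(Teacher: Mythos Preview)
Your proof is correct and follows essentially the same approach as the paper: your Lyapunov function $Q(z)=\Re{\lambda z_{J,k-1}\overline{z_{J,k}}}$ is exactly the paper's quantity $\langle \lambda z_{J,k-1},z_{J,k}\rangle$, and both arguments hinge on the same increment identity $Q(Az)=Q(z)+|z_{J,k}|^2$ (for $z_{J,>k}=0$). The paper packages the absorbing region slightly differently---it thresholds $|z_{J,k-1}|$ from below and imposes $Q\ge 0$ to make that threshold monotone, whereas you threshold $Q$ directly and pin $|z_{J,k}|$ (the latter constraint is in fact redundant for stability)---but the essential content is identical.
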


We illustrate the construction of the semialgebraic invariant in an
example following the proof.  (See also Example~\ref{ex:three} from
the Introduction.)

\begin{proof}
By assumption, there exists a non-diagonal Jordan block $J$.
Since $\ell$ is non-trivial, $x$ has a non-zero coordinate in $J$ which is not the first one. 
Let $k$ such that $x_{J,k} \neq 0$ and $x_{J,>k} = 0$, we have $k \ge 2$ and
\[
(A^n x)_{J,k-1} = \lambda^n x_{J,k-1} + n \lambda^{n-1} x_{J,k},
\]
so $(|(A^n x)_{J,k-1}|)_{n \in \N}$ diverges to infinity since $|\lambda| = 1$.
It follows that there exists $n_0 \in \N$ such that $|(A^{n_0} x)_{J,k-1}| > |y_{J,k-1}|$.
Without loss of generality we assume $n_0 \geq -\frac{\Dot{\lambda x_{J,k-1},x_{J,k}}}{|x_{J,k}|^2}$. 
The notation $\Dot{u,v}$ designates the scalar product of the complex numbers $u$ and $v$ viewed
as vectors in $\mathbb{R}^2$, defined by $\Re{u \overline{v}}$.
This quantity will appear later; note that
%it is constant, meaning that
it only depends on $x$ and $A$.

Let 
\[
\P = \set{x, Ax, \ldots, A^{n_0-1} x} \cup 
\set{z \in \C^d \left|
\begin{array}{l}
|z_{J,k-1}| \geq |(A^{n_0} x)_{J,k-1}| \text{, and } \\
\Dot{\lambda z_{J,k-1},z_{J,k}} \geq 0 \text{, and } z_{J,>k} = 0
\end{array}\right.
}.
\]

We argue that $\P$ is a semialgebraic invariant for $\ell$.
It is a semialgebraic set: the condition $\Dot{\lambda z_{J,k-1},z_{J,k}} \geq 0$ is of the form $P(z) \ge 0$ for a polynomial $P$ with algebraic coefficients,
where $z$ is seen as a vector in $\R^{2d}$.
The part to be looked at closely is the stability of $\P$ under $A$.

First, $A^{n_0} x \in \P$. 
Indeed, using $|\lambda|=1$ and the assumption on $n_0$:
\[
\begin{array}{lll}
\Dot{\lambda(A^{n_0} x)_{J,k-1} , (A^{n_0} x)_{J,k}}
& = & \Dot{\lambda \cdot \left(\lambda^{n_0} x_{J,k-1} + n_0 \lambda^{n_0-1} x_{J,k} \right) , \lambda^{n_0} x_{J,k}} \\ 
& = & |\lambda^{n_0}|^2 \Dot{\lambda x_{J,k-1} , x_{J,k}} + n_0 |\lambda^{n_0} x_{J,k}|^2\\
& = & \Dot{\lambda x_{J,k-1} , x_{J,k}} + n_0 |x_{J,k}|^2\\
& \geq & 0.
\end{array}
\]

Now, let $z \in \C^d$ such that $|z_{J,k-1}| \geq |(A^{n_0} x)_{J,k-1}|$, $\Dot{\lambda z_{J,k-1},z_{J,k}} \geq 0$ and $z_{J,>k} = 0$.
We have $(A z)_{J,k-1} = \lambda z_{J,k-1} + z_{J,k}$, $(A z)_{J,k} = \lambda z_{J,k}$ and $(A z)_{J,>k} = 0$.
It follows that:
\[
\begin{array}{lll}
\left|(A z)_{J,k-1}\right|^2 & = & \left|\lambda z_{J,k-1} + z_{J,k} \right|^2 \\
& = & |z_{J,k-1}|^2 + 2 \langle \lambda z_{J,k-1}, z_{J,k} \rangle + |z_{J,k}|^2 \\
& \geq & |z_{J,k-1}|^2 \\
& \geq & \left|(A^{n_0} x)_{J,k-1} \right|^2,
\end{array}
\]
and:
\[
\begin{array}{lll}
\langle \lambda(A z)_{J,k-1}, (A z)_{J,k} \rangle & = & \langle \lambda(\lambda z_{J,k-1} + z_{J,k}), \lambda z_{J,k} \rangle \\
& = & |\lambda|^2 \langle \lambda z_{J,k-1} +  z_{J,k}, z_{J,k} \rangle \\
& = & \langle \lambda z_{J,k-1}, z_{J,k} \rangle + |z_{J,k}|^2 \\
& \geq & 0.
\end{array}
\]
Hence $A z \in \P$, and $\P$ is a semialgebraic invariant for $\ell$.
\end{proof}

\begin{example}
Consider the following matrix:
\[
A = \begin{bmatrix}
e^{i \theta} & 1 \\
0 & e^{i \theta} \\
\end{bmatrix},
\]
where $\theta \in \R$ is an angle such that $\frac{\theta}{\pi} \notin \Q$.
We start from the vector $x = \left[1,\ 1\right]^T$.
We have
\[
A^n x = \left[e^{in\theta} + n e^{i(n-1)\theta},\ e^{in\theta} \right],
\]
so the projection on the second coordinate is a dense subset of the
unit circle, and the projection on the first coordinate describes a
growing spiral (similar to that shown in
Figure~\ref{fig:lambdagr1semial}).  A tentative invariant for
excluding some vector $y$ is the complement of a circle on the first
coordinate, large enough not to include $y$. However, this set is not
a priori invariant.  Geometrically, the action of $A$ on a vector
$\left[z_1,\ z_2\right]$ is to rotate both $z_1$ and $z_2$ by an angle
of $\theta$, and to push the first coordinate in the direction of
$z_2$:
\[
A \left[z_1,\ z_2\right] = \left[e^{i\theta} z_1 + z_2,\ e^{i\theta} z_2\right].
\]
A natural way to restrict the above set to make it invariant is to ensure that $z_2$ pushes away from the origin,
\emph{i.e.}, that the norm of $(A z)_1$ increases. 
This is achieved by requiring that $\Dot{e^{i\theta} z_1, z_2} \ge 0$.
\end{example}

\subsection{All eigenvalues have modulus one and the matrix is diagonalisable} 
\label{sec:eqone}
This case is the most involved and is the only one in which it might
hold that $y$ not be reachable and yet no semialgebraic invariant
exists.  (Recall Example~\ref{ex:one} from the Introduction.)  Using results
from Diophantine approximation and algebraic number theory, we show
that the topological closure of the orbit $\overline{\set{A^nx:
x\in\mathbb{N}}}$ is (effectively) semialgebraic.  Furthermore, using
topological properties of semialgebraic sets we show that any
semialgebraic invariant must contain the closure of the orbit.  It
follows that there exists a semialgebraic invariant just in case
$y\not\in \overline{\set{A^nx: x\in\mathbb{N}}}$.

We start with the following topological fact about semialgebraic sets.

\begin{lemma}\label{fact}
Let $E,F \subseteq \mathbb{R}^n$ be two sets such that $\overline{E}
= \overline{F}$ and $F$ is semialgebraic.  Then $E \cap
F \neq \emptyset$.
\end{lemma}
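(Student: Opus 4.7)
The plan is to argue by contradiction. Suppose $E \cap F = \emptyset$; then since $E \subseteq \overline{E} = \overline{F}$, we obtain $E \subseteq \overline{F} \setminus F$, and passing to closures yields the key inclusion
\[
\overline{F} \;=\; \overline{E} \;\subseteq\; \overline{\,\overline{F} \setminus F\,}.
\]
From here I would derive a contradiction purely from the dimension theory of semialgebraic sets.

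Two standard facts about semialgebraic sets will suffice (see e.g.\ Bochnak--Coste--Roy, \emph{Real Algebraic Geometry}): first, every semialgebraic $S \subseteq \mathbb{R}^n$ satisfies $\dim(S) = \dim(\overline{S})$, and second, for every nonempty semialgebraic $S$, the frontier $\overline{S} \setminus S$ is again semialgebraic and satisfies $\dim(\overline{S} \setminus S) < \dim(S)$. I may assume $F$ nonempty, since in the paper this lemma is applied with $E$ a nonempty (orbit-related) set, which forces $\overline{F} = \overline{E} \neq \emptyset$ and hence $F \neq \emptyset$.

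Chaining the two facts with the displayed inclusion and monotonicity of dimension gives
\[
\dim(F) \;=\; \dim(\overline{F}) \;\leq\; \dim\bigl(\overline{\,\overline{F} \setminus F\,}\bigr) \;=\; \dim(\overline{F} \setminus F) \;<\; \dim(F),
\]
the desired contradiction. The outer equalities apply the first fact to $F$ and to $\overline{F} \setminus F$, the middle inequality is monotonicity of dimension for the inclusion above, and the final strict inequality is the second fact applied to $F$.

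The main conceptual step is recognising that it is the semialgebraic dimension that must be tracked, together with the strict drop in dimension when passing to the frontier of a semialgebraic set; once that is in place the argument is essentially a one-line chain of (in)equalities, and the only routine verifications are that $\overline{F}$ and $\overline{F} \setminus F$ are semialgebraic (both standard: semialgebraic sets are closed under topological closure by Tarski--Seidenberg, and under Boolean operations by definition). Semialgebraicity of $F$ is genuinely essential to the conclusion: the purely topological analogue fails, as witnessed by $E = \mathbb{Q}^n \cap [0,1]^n$ and $F = [0,1]^n \setminus \mathbb{Q}^n$, which share the closure $[0,1]^n$ yet are disjoint.
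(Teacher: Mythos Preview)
Your proof is correct and rests on the same foundation as the paper's, namely the dimension theory of semialgebraic sets together with the fact that $\dim(S)=\dim(\overline{S})$. The packaging differs slightly: the paper invokes the companion result that a semialgebraic subset $X\subseteq Y$ with $\dim(X)=\dim(Y)$ has nonempty interior in $Y$, applies this to $F\subseteq\overline{F}$, and then finishes with the purely topological observation that a dense set (here $E$) must meet any set with nonempty interior. You instead use the frontier inequality $\dim(\overline{F}\setminus F)<\dim(F)$ directly, which lets you close the argument as a single chain of dimension (in)equalities without the separate density step. The two auxiliary facts are standard and essentially interderivable, so the arguments are morally the same; if anything, your version is marginally more self-contained, while the paper's version makes the geometric picture (that $F$ occupies an open patch of $\overline{\mathcal O}$ which the dense orbit must hit) a bit more visible. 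Your remark that nonemptiness of $E$ (hence of $F$) must be assumed is well taken and is implicit in the paper's intended application.
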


\begin{proof}
The proof uses the notion of the dimension of a semialgebraic set.
The formal definition of dimension uses the cell-decomposition
theorem (see, e.g.,~\cite[Chapter 4]{vDries98}).  However
to establish the lemma it suffices to note the following two
properties of the dimension.  First, for any semi-algebraic set
$X\subseteq \mathbb{R}^n$ set we have
$\dim(X)=\dim(\overline{X})$~\cite[Chapter 4, Theorem 1.8]{vDries98}.
Secondly, if $X \subseteq Y$ are semi-algebraic subsets of
$\mathbb{R}^n$ that have the same dimension, then $X$ has non-empty
interior in $Y$~\cite[Chapter 4, Corollary 1.9]{vDries98}.

In the situation at hand, since $\dim(F)=\dim(\overline{F})$ it
follows that $F$ has non-empty interior (with respect to the subspace
topology) in $\overline{F}=\overline{E}$.  But then $E$ is dense in
$\overline{E}$ while $F$ has non-empty interior in $\overline{E}$, and
thus $E$ and $F$ meet.
\end{proof}

\begin{lemma}
  Let $\ell = (A,x,y)$ be an Orbit instance, where
  $A=\mathrm{diag}(\lambda_1,\ldots,\lambda_d)$ is a diagonal
  $d\times d$ matrix with entries
  $\lambda_1,\ldots,\lambda_d\in \mathbb{C}$ all having modulus one.
  Write $\mathcal{O}=\{ A^n x : n \in \mathbb{N} \}$ for the orbit of
  $x$ under $A$.  Then
\begin{itemize}
\item The topological closure of $\mathcal{O}$ in $\C^d$ is a
  semi-algebraic set that is computable from $\ell$ in polynomial
  space.
\item Any semi-algebraic invariant for $\ell$ contains
  $\overline{\mathcal{O}}$.
\end{itemize}
\label{lem:semi-alg}
\end{lemma}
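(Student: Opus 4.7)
The plan is to describe $\overline{\O}$ exactly via Kronecker's theorem on simultaneous Diophantine approximation together with Masser's bound on multiplicative relations, and then to deduce the second claim from the topological Lemma~\ref{fact}.

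For the first claim, diagonality gives $A^n x = (\lambda_1^n x_1, \ldots, \lambda_d^n x_d)$, so $\O$ is the image, under the continuous map $\phi\colon \mathbb{T}^d \to \C^d$, $\phi(u) = (u_1 x_1, \ldots, u_d x_d)$, of the forward orbit $\set{\lambda^n : n \geq 0}$ of $\lambda = (\lambda_1, \ldots, \lambda_d)$ inside the compact torus $\mathbb{T}^d = (S^1)^d$. Kronecker's theorem identifies the closure of this forward orbit in $\mathbb{T}^d$ with the closed subgroup $T_\lambda = \set{u \in \mathbb{T}^d : u^k = 1 \text{ for all } k \in L}$, where $L = \set{k \in \Z^d : \lambda_1^{k_1} \cdots \lambda_d^{k_d} = 1}$ is the lattice of multiplicative relations among the $\lambda_i$; by continuity and compactness of $\phi$, this yields $\overline{\O} = \phi(T_\lambda)$. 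Each relation $u^k = 1$ is encoded by the polynomial equation $\prod_{k_i > 0} u_i^{k_i} = \prod_{k_i < 0} u_i^{-k_i}$, the torus conditions $|u_i|^2 = 1$ are polynomial, and Masser's theorem provides a basis of $L$ of bit-size polynomial in the input. Composing with $\phi$ therefore yields a polynomial-size semialgebraic description of $\overline{\O}$, computable in polynomial space.

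The key observation for the second claim is that $T_\lambda$ is a group, so $\overline{\O} = \phi(T_\lambda)$ is stable under both $A$ and $A^{-1}$ (since $\lambda^{-1} \in T_\lambda$). In a compact group the closure of a sub-semigroup is a subgroup, so from density of $\set{\lambda^n : n \geq 0}$ in $T_\lambda$ we also get density of $\set{\lambda^{-n} : n \geq 0}$; hence for every $z = \phi(u) \in \overline{\O}$ the backward orbit $\set{A^{-n} z : n \geq 0}$ is dense in $\overline{\O}$. Now suppose for contradiction that $\P$ is a semialgebraic invariant with some $z \in \overline{\O} \setminus \P$. Since $A \P \subseteq \P$ implies $A^{-1}(\C^d \setminus \P) \subseteq \C^d \setminus \P$, the semialgebraic set $F = \overline{\O} \setminus \P$ is $A^{-1}$-invariant, so it contains the backward orbit of $z$, and density then gives $\overline{F} = \overline{\O}$. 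Applying Lemma~\ref{fact} with $E = \O$ (whose closure is also $\overline{\O}$) and $F$ yields $\O \cap F \neq \emptyset$, contradicting $\O \subseteq \P$.

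The main obstacle is the effective polynomial-space description: the lattice $L$ could \emph{a priori} require generators of enormous bit-size, and it is Masser's deep quantitative bound that rescues us by providing a basis of $L$ of polynomial bit-size. Once that is in hand, the second claim is a clean topological argument, hinging on the group structure of $T_\lambda$ (which makes every backward orbit in $\overline{\O}$ dense) combined with Lemma~\ref{fact}.
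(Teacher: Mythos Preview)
Your proof is correct and follows essentially the same approach as the paper: Kronecker's theorem plus Masser's bound to identify $\overline{\O}$ with the image of the torus subgroup $T_\lambda$, and then the group structure of $T_\lambda$ combined with Lemma~\ref{fact} for the minimality claim. The only cosmetic difference is in how Lemma~\ref{fact} is deployed: the paper takes $E$ to be the backward orbit $\{A^{-n}z:n\ge 0\}$ of an arbitrary $z\in\overline{\O}$ and $F=\P\cap\overline{\O}$, concluding that some $A^{-n}z\in\P$ and hence $z\in\P$ by forward invariance; you instead take $E=\O$ and $F=\overline{\O}\setminus\P$, reaching a contradiction with $\O\subseteq\P$. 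Both variants hinge on exactly the same fact---that every backward orbit in $\overline{\O}$ is dense---which the paper obtains by observing $L_{A^{-1}}=L_A$ (so $T_{A^{-1}}=T_A$), while you invoke the general principle that the closure of a sub-semigroup of a compact group is a subgroup; these are equivalent here.
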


\begin{proof}
We start by proving the first item.

Write $\mathbb{T}$ for the unit circle in $\C$.
	Let
        \[ L_A = \set{v \in \Z^d \mid \lambda_1^{v_1} \cdots
            \lambda_d^{v_d} = 1}\] be the set of all multiplicative
        relations holding among $\lambda_1,\ldots,\lambda_d$.  Notice
that $L_A$ is an additive subgroup of $\Z^d$.   Consider the 
set of diagonal $d\times d$ matrices
\[ {T}_A = \set{ \mathrm{diag}(\mu_1,\ldots,\mu_d) \mid \mu \in
    \mathbb{T}^d \mbox{ and } \forall v \in L_A \, ( \mu_1^{v_1} \cdots
    \mu_d^{v_d} = 1)} \] whose diagonal entries satisfy the
multiplicative relations in $L_A$.  Notice that $T_A$ forms a group under
matrix multiplication that is also a closed subset of $\mathbb{C}^{d\times d}$.

Using Kronecker's Theorem on inhomogeneous simultaneous Diophantine
approximation~\cite{C65}, it is shown in~\cite[Proposition 3.5]{OW14}
that $\{ A^n : n \in \mathbb{N}\}$ is a dense subset of ${T}_A$.  This
immediately gives
\begin{gather}
\overline{\mathcal{O}} = \overline{\{A^nx:n\in\mathbb{N}\}} 
= \{Mx: M \in {T}_A\} \, .
\label{eq:closure}
\end{gather}

We now show that $\overline{\O}$ is semi-algebraic.  Observe that
$L_A$ is finitely generated, being a subgroup of a finitely generated
group.  Moreover, if $B \subseteq L_A$ is a basis of $L_A$ then we can
write
\[ {T}_A = \set{ \mathrm{diag}(\mu_1,\ldots,\mu_d) \mid \mu \in
    \mathbb{T}^d \mbox{ and } \forall v \in B\, (\mu_1^{v_1} \cdots
    \mu_d^{v_d} = 1)} \, . \] It follows that ${T}_A$ is a
semi-algebraic subset of $\mathbb{C}^{d\times d}$ and thus
from  (\ref{eq:closure}) that $\overline{\O}$ is a
semi-algebraic set.  

From an upper bound on the length of $B$ due to Masser~\cite{M88}, it
can be shown that one can compute a basis for $L_A$ in polynomial
space in the description of $A$ (see~\cite[Corollary 3.3]{OW14}) and
thereby compute a description of ${T}_A$ as a semi-algebraic set, also
in polynomial space in the description of $A$.

Now we move to the second item in the statement of the lemma.
Let $\P$ be a semi-algebraic invariant for $\ell$.  Our goal is
  to show that $\overline{\O} \subseteq \P$.  To show this we can,
  without loss of generality, replace $\P$ by $\P \cap \overline{\O}$, since the
  latter is also a semi-algebraic invariant.  Moreover, since any
  invariant necessarily contains the orbit $\O$, we may suppose
  that $\O \subseteq \P \subseteq \overline{\O}$, and hence
  $\overline{\P}=\overline{\O}$.

  We now prove that $\overline{\O} \subseteq \P$, that is, we pick an
  arbitrary element $z \in \overline{\O}$ and show that $z\in \P$.  To
  this end, consider the orbit of $z$ under the matrix $A^{-1}$.  Now
  $A^{-1}=\mathrm{diag}(\lambda_1^{-1},\ldots,\lambda_d^{-d})$ and we
  may define groups $L_{A^{-1}}$ and $T_{A^{-1}}$ analogously with
  $L_A$ and $T_A$.  In fact it is clear that $L_{A}$ and
  $L_{A^{-1}}$ coincide (i.e., $\lambda_1,\ldots,\lambda_d$ satisfy
  exactly the same multiplicative relations as
  $\lambda^{-1}_1,\ldots,\lambda^{-1}_d$), and hence
  also ${T}_A = {T}_{A^{-1}}$.

Now we claim that the following chain of equalities holds:
\begin{eqnarray}
\overline{  \{ A^{-n} z : n \in \mathbb{N} \} }
&=&
\{ Mz : M \in {T}_{A^{-1}} \} \label{eq:one} \\
&=&
\{ Mz : M \in {T}_{A} \} \label{eq:two} \\
&=& 
\{ Mx : M \in {T}_{A} \} \label{eq:three} \\
&= & 
\overline{\O} \; = \; \overline{\mathcal{P}} \,  . \notag
\end{eqnarray}
Indeed, Equation (\ref{eq:one}) is an instance of (\ref{eq:closure}),
but with $A^{-1}$ and $z$ in place of $A$ and $x$.  Equation
(\ref{eq:two}) follows from the fact that ${T}_A = {T}_{A^{-1}}$.  To
see Equation (\ref{eq:three}), observe from (\ref{eq:closure}) that
$z$ has the form $M_0x$ for some $M_0 \in {T}_A$.  But
$\{MM_0 x : M \in {T}_A \} = \{Mx : M \in {T}_A \}$ since ${T}_A$,
being a group, contains $M_0^{-1}$.

Now we have established that 
\[ \overline{ \{ A^{-n} z : n \in \mathbb{N} \} } =
  \overline{\P} \, .\] Then by Lemma~\ref{fact} we have that
$A^{-n} z$ lies in $\P$ for some $n \in \mathbb{N}$.  But since $\P$
is invariant under $A$ we have $z \in \P$.
\end{proof}

\begin{corollary}
  Let the Orbit instance $\ell$ be as described in Lemma~\ref{lem:semi-alg}.
  Then $\ell$ admits a semi-algebraic invariant if and only if
  $y \notin \overline{\O}$.
\end{corollary}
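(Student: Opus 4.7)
The plan is to derive the corollary directly from Lemma~\ref{lem:semi-alg}, splitting into the two implications.

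For the forward direction, suppose there exists a semialgebraic invariant $\mathcal{P}$ for $\ell$. By the second item of Lemma~\ref{lem:semi-alg}, we have $\overline{\mathcal{O}} \subseteq \mathcal{P}$. Since $y \notin \mathcal{P}$ by definition of an invariant, it follows that $y \notin \overline{\mathcal{O}}$.

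For the converse, suppose $y \notin \overline{\mathcal{O}}$. I would simply exhibit $\overline{\mathcal{O}}$ itself as the desired semialgebraic invariant. First, $\overline{\mathcal{O}}$ is semialgebraic by the first item of Lemma~\ref{lem:semi-alg} (and is moreover computable in polynomial space from $\ell$). Second, $x = A^0 x \in \mathcal{O} \subseteq \overline{\mathcal{O}}$. Third, $y \notin \overline{\mathcal{O}}$ by assumption. The only point that requires a brief argument is stability under $A$: since $A \mathcal{O} \subseteq \mathcal{O} \subseteq \overline{\mathcal{O}}$ and the linear map induced by $A$ is continuous, we have $A \overline{\mathcal{O}} \subseteq \overline{A \mathcal{O}} \subseteq \overline{\mathcal{O}}$. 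Hence $\overline{\mathcal{O}}$ is a semialgebraic invariant for $\ell$.

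There is no real obstacle here, as both directions are immediate consequences of Lemma~\ref{lem:semi-alg}; the content of the result is entirely carried by the lemma. The corollary serves to package the two items of the lemma into a clean characterisation of the existence of semialgebraic invariants in the diagonal modulus-one case, completing the analysis of this last remaining subcase in the proof of Theorem~\ref{thm:main}.
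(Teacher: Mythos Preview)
Your proof is correct and follows the same approach as the paper: both directions are derived directly from the two items of Lemma~\ref{lem:semi-alg}, with $\overline{\mathcal{O}}$ serving as the invariant in the backward direction. If anything, your argument is slightly more explicit, since you verify the stability of $\overline{\mathcal{O}}$ under $A$ via continuity, whereas the paper simply asserts that $\overline{\mathcal{O}}$ is an invariant.
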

\begin{proof}
  If $y \notin \overline{\O}$, then $\overline{\O}$ is a
  semi-algebraic invariant for $\ell$ by the first item in
  Lemma~\ref{lem:semi-alg}.  Conversely, if there exists a
  semi-algebraic invariant $\P$ for $\ell$, then
  $\overline{\O}\subseteq \P$ by the second item in
  Lemma~\ref{lem:semi-alg}, implying that $y \notin \overline{\O}$.
\end{proof}

%\begin{example}
%Consider the diagonal matrix $A = \mathrm{diag}(e^{i \theta},e^{i \theta'})$ of dimension $2$.
%We start from the vector $x = \left[1,\ 1\right]^T$, and ask whether some vector $y$ is reachable.
%Note that this is equivalent to asking whether there exists $n \in \N$ such that $e^{i n\theta} + e^{i n\theta'} = y$.
%
%On both coordinates, the matrix $A$ acts as a rotation, of angle $\theta$ in the first coordinate and $\theta'$ in the second.
%If $\frac{\theta}{\pi} \notin \Q$ and $\frac{\theta'}{\pi} \notin \Q$, both projections of the orbit are dense in the unit circle.
%Since any semi-algebraic invariant contains the orbit, it must also contain the unit circle, hence is of no help to determine whether $y$ is reached or not
%when $y$ lies on the unit circle.
%Indeed, complicated dependencies between $\theta$ and $\theta'$ may prevent from reaching $y$, and this cannot be captured by semi-algebraic invariants.
%\end{example}

\subsection{Proof of Theorem~\ref{thm:main}}
We now draw together the results of the previous sections to prove our
main result, Theorem~\ref{thm:main}, giving an effective
characterisation of the existence of semialgebraic invariants and a
procedure to compute such an invariant when it exists.

Let $\ell = (A,x,y)$ be a non-reachability Orbit instance.  First we put
$A$ in Jordan normal form and simplify $\ell$ to obtain a non-trivial
Orbit instance.  We then divide into three cases.
\begin{itemize}
        \item If some eigenvalue of $A$ has modulus different from
          $1$ then there is a semialgebraic invariant (see Section~\ref{sec:neqone}).
        \item If all eigenvalues have modulus $1$ and the matrix is
          not diagonalisable then there is a semialgebraic invariant (see Section~\ref{sec:nondiag}).
        \item If all eigenvalues have modulus $1$ and the matrix is
          diagonalisable, then there exists a semialgebraic invariant
          if and only if the topological closure of the orbit
          $\overline{\set{A^nx:n\in\mathbb{N}}}$ is such an invariant,
          which holds if and only if the closure does not contain $y$
          (see Section~\ref{sec:eqone}). Note therefore that
          non-reachability Orbit instances for which there do not
          exist semialgebraic invariants are extremely sparse.
\end{itemize}
Thus we obtain an effective characterisation of the class of Orbit instances for
which there exists a semialgebraic invariant.  Moreover in those cases
in which there exists an invariant we have shown how to compute such
an invariant in polynomial space.

\subsection{On topologically closed invariants}

We now shortly discuss the point of view of topologically closed semialgebraic invariants. It is worth noting that the only non-closed invariant we synthesise is in case 2.2 of lemma \ref{lem:reduc}. This allows to state a more concise result, although a bit weaker. We state it as a corollary.

\begin{corollary}
A topologically closed semialgebraic invariant for the loop $\ell = (A,x,y)$ exists if and only if $y \notin \overline{\set{A^nx:x\in\mathbb{N}}}$ the topological closure of the orbit.
\end{corollary}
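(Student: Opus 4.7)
The forward direction is immediate: if $\mathcal{P}$ is any topologically closed semialgebraic invariant then the orbit $\mathcal{O}$ lies in $\mathcal{P}$ (by induction on $n$), hence $\overline{\mathcal{O}} \subseteq \overline{\mathcal{P}} = \mathcal{P}$, and $y \notin \mathcal{P}$ forces $y \notin \overline{\mathcal{O}}$. For the converse, assume $y \notin \overline{\mathcal{O}}$ and revisit the constructions used in the proof of Theorem~\ref{thm:main}. A direct inspection shows that every invariant appearing in Sections~\ref{sec:neqone}, \ref{sec:nondiag} and~\ref{sec:eqone} is already topologically closed: each is cut out by Boolean combinations of non-strict polynomial inequalities involving moduli and scalar products, and in Section~\ref{sec:eqone} the invariant is literally $\overline{\mathcal{O}}$. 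The reduction steps of Lemma~\ref{lem:reduc} also preserve closedness under extension, with the sole exception of case~2.2, whose separating set $\set{z : z_{J,k} \neq 0,\ z_{J,>k} = 0}$ involves a strict inequality.

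The substantive task is therefore to replace the case~2.2 construction by a closed one, exploiting $y \notin \overline{\mathcal{O}}$. Let $J$ be a Jordan block with eigenvalue $\lambda$ such that $x_J \neq 0$ and $y_J = 0$, and let $k$ be the index with $x_{J,k} \neq 0$ and $x_{J,>k} = 0$. If $|\lambda| \geq 1$ then $|(A^n x)_{J,k}| = |\lambda|^n |x_{J,k}| \geq |x_{J,k}|$ for every $n$, so the closed semialgebraic set $\set{z : |z_{J,k}| \geq |x_{J,k}|,\ z_{J,>k} = 0}$ is $A$-invariant, contains $\mathcal{O}$, and excludes $y$. If $|\lambda| < 1$ then $(A^n x)_J \to 0 = y_J$, so $y \notin \overline{\mathcal{O}}$ forces the projected Orbit instance $\ell' = (A_{\overline{J}}, x_{\overline{J}}, y_{\overline{J}})$ to satisfy $y_{\overline{J}} \notin \overline{\set{A_{\overline{J}}^n x_{\overline{J}} : n \in \mathbb{N}}}$; otherwise a subsequence of the smaller orbit converging to $y_{\overline{J}}$, combined with the automatic convergence of the $J$-part to $0$, would witness $y \in \overline{\mathcal{O}}$. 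Induction on the dimension then supplies a closed semialgebraic invariant $\mathcal{P}'$ for $\ell'$, and for $n_0$ sufficiently large one verifies that
\[
\mathcal{P} = \set{x, Ax, \ldots, A^{n_0-1}x} \cup \set{z : z_{\overline{J}} \in \mathcal{P}',\ |z_{J,k}| \leq |(A^{n_0}x)_{J,k}|,\ z_{J,>k} = 0}
\]
is a closed semialgebraic invariant for $\ell$: invariance of the right-hand set uses $|\lambda| \leq 1$, and $y \notin \mathcal{P}$ because $y_{\overline{J}} \notin \mathcal{P}'$.

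The hard part is this $|\lambda|<1$ sub-case. Since the orbit's $J$-projection collapses onto $y_J$, no separation can originate in the $J$-block itself; one must transfer the entire separation to the complementary subsystem via the closure hypothesis, turning the strict inequality of the original case~2.2 into a non-strict upper bound that tapers in step with the orbit tail.
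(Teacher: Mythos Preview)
Your forward direction is correct, and your inventory of which constructed invariants are already closed is accurate: the only problematic construction is indeed case~2.2 of Lemma~\ref{lem:reduc}. The sub-case $|\lambda|\ge 1$ is handled correctly.

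There is, however, a genuine gap in the sub-case $|\lambda|<1$. You assert that $y\notin\overline{\O}$ forces $y_{\overline{J}}\notin\overline{\O'}$, where $\O'=\{A_{\overline{J}}^{\,n}x_{\overline{J}}:n\in\N\}$, arguing that otherwise a subsequence converging to $y_{\overline{J}}$ in the projection would combine with $(A^nx)_J\to 0$ to place $y$ in $\overline{\O}$. This reasoning requires the subsequence to have indices tending to infinity, but that need not be the case: $y_{\overline{J}}$ may lie in $\O'$ as an \emph{isolated} point of $\overline{\O'}$. Concretely, take $A=\mathrm{diag}(2,\tfrac12)$, $x=(1,1)$, $y=(2,0)$. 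Here $J$ is the second block, $|\lambda|=\tfrac12<1$, $x_J=1\neq 0$, $y_J=0$, so case~2.2 applies. One checks that $\overline{\O}=\{(2^n,2^{-n}):n\ge 0\}$ is discrete and does not contain $y$, yet $y_{\overline{J}}=2=A_{\overline{J}}^{\,1}x_{\overline{J}}\in\O'\subseteq\overline{\O'}$. Your inductive hypothesis therefore fails for $\ell'$, and no closed invariant $\P'$ is supplied. (A closed invariant for $\ell$ does exist here, e.g.\ $\{z:z_1z_2=1\}$, so the corollary is not contradicted---only your argument.)

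The difficulty is structural: when the projected target sits on the projected orbit at some finite index $m$, one cannot simply discard block $J$, because the separation between $y$ and $A^m x$ lives entirely in the $J$-coordinates. One repair is to observe that the set of such $m$ is finite (otherwise $y_{\overline{J}}$ would be a genuine limit point and your argument would go through), and to shift the reduced instance past the largest such $m$; but then the finitely many early points $x,Ax,\ldots,A^m x$ include one whose $\overline{J}$-projection equals $y_{\overline{J}}$, so the final invariant must exclude $y$ via a constraint on the $J$-coordinates at those points, which needs to be made explicit. Alternatively, one can reorder the global case analysis so that Sections~\ref{sec:neqone} and~\ref{sec:nondiag} are applied before the reductions of Lemma~\ref{lem:reduc}; in the counterexample above, the eigenvalue $2$ would then trigger the closed construction of Section~\ref{sec:neqone} directly. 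Either route requires additional argument beyond what you have written.
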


A full proof will appear in the journal version of this paper.

\section{Conclusions}
%We wrap up the proof of Theorem~\ref{thm:main}.
%Let $\ell = (x,A,y)$ be a linear loop, we put $A$ in Jordan normal form and simplify $\ell$ to obtain a non-trivial linear loop, and then:
%\begin{itemize}
%	\item If some eigenvalue of $A$ has modulus different from $1$, then there exists a semi-algebraic invariant,
%	\item If all eigenvalues have modulus $1$ and the matrix is not diagonalisable, then there exists a semi-algebraic invariant,
%	\item If all eigenvalues have modulus $1$ and the matrix is diagonalisable, then there exists a semi-algebraic invariant if, and only if,
%	the topological closure of the orbit is such a set, which is equivalent to asking whether it contains $y$.
%\end{itemize}

%In most cases, when $\ell$ does not terminate this is witnessed by a semi-algebraic invariant. Furthermore, in such a case, there exists a semi-algebraic invariant
%of polynomial space description.

\medskip
This paper is a first step towards the study of invariants for
discrete linear dynamical systems. At present, the question of the existence and
of the algorithmic synthesis of suitable invariants for
higher-dimensional versions of the Orbit Problem (i.e., when the
`target' $y$ to be avoided consists of either a vector space, a
polytope, or some other higher-dimensional object) is completely
open. Given, as pointed out earlier, that reachability questions with
high-dimensional targets appear themselves to be very difficult, one
does not expect the corresponding invariant synthesis problems to be
easy, yet this approach might prove a tractable alternative well worth
exploring.

Our main result is a polynomial-space procedure for deciding existence
and computing semialgebraic invariants in instances of the Orbit
Problem.  The only obstacle to obtaining a polynomial-time bound is
the problem of computing a basis of the group of all multiplicative
relations among a given collection of algebraic numbers
$\alpha_1,\ldots,\alpha_d$, which is not known to be solvable in
polynomial time.  Less ambitiously one can ask for a polynomial-time
procedure to verify a putative relation
$\alpha_1^{n_1}\ldots \alpha_d^{n_d}\stackrel{?}{=}1$.  Assuming that
$\alpha_1,\ldots,\alpha_d$ are represented as elements of an
explicitly given finite-dimensional algebra $K$ over $\mathbb{Q}$,
Ge~\cite{Ge93:FOCS} gave a polynomial-time algorithm for verifying
multiplicative relations.  In our setting, however, where
$\alpha_1,\ldots,\alpha_d$ are roots of the characteristic polynomial
of matrix $A$, the dimension of $K$ may be exponential in $d$.

\bibliography{biblio}

\end{document}